\documentclass[letterpaper]{article}
\usepackage{aaai}
\usepackage{times}
\usepackage{helvet}
\usepackage{courier}
\frenchspacing
\setlength{\pdfpagewidth}{8.5in}
\setlength{\pdfpageheight}{11in}

\usepackage{graphicx}
\usepackage{subfigure}
\usepackage{amsmath}
\usepackage{amsthm}

\usepackage{color}
\usepackage{hyphenat}
\usepackage[vlined, ruled, commentsnumbered, linesnumbered]{algorithm2e}
\newcommand{\algoname}[1]{\textnormal{\textsc{#1}}}
\newtheorem{theorem}{Theorem}

\newtheorem{definition}{Definition}

\SetKw{KwAnd}{and}
\SetKw{KwBreak}{break}
\SetKwInOut{In}{Input}
\SetKwInOut{Out}{Output}

\hyphenation{a-no-ny-mi-ty a-no-ny-mi-za-tion a-no-ny-mized}

\newcommand*{\LargerCdot}{\raisebox{-0.25ex}{\scalebox{1.2}{$\cdot$}}}

\pdfinfo{
/Title (On the k-Anonymization of Time-varying and Multi-layer Social Graphs)
/Author (Luca Rossi, Mirco Musolesi, Andrea Torsello)
}
\setcounter{secnumdepth}{0}  
 \begin{document}
%
\title{On the k-Anonymization of Time-varying and Multi-layer Social Graphs}
\author{Luca Rossi\\ School of Computer Science\\ University of Birmingham \\ United Kingdom \\ l.rossi@cs.bham.ac.uk \And Mirco Musolesi\\ School of Computer Science\\ University of Birmingham \\ United Kingdom \\ m.musolesi@cs.bham.ac.uk \And
Andrea Torsello\\ DAIS\\ Universit\`{a} Ca' Foscari Venezia\\ Italy \\ torsello@dais.unive.it}%

\maketitle
\begin{abstract}
\begin{quote}
The popularity of online social media platforms provides an unprecedented opportunity to study real-world complex networks of interactions. However, releasing this data to researchers and the public comes at the cost of potentially exposing private and sensitive user information. It has been shown that a naive anonymization of a network by removing the identity of the nodes is not sufficient to preserve users' privacy. In order to deal with malicious attacks, $k$-anonymity solutions have been proposed to partially obfuscate topological information that can be used to infer nodes' identity.

In this paper, we study the problem of ensuring k-anonymity in time-varying graphs, i.e., graphs with a structure that changes over time, and multi-layer graphs, i.e., graphs with multiple types of links. More specifically, we examine the case in which the attacker has access to the degree of the nodes. The goal is to generate a new graph where, given the degree of a node in each (temporal) layer of the graph, such a node remains indistinguishable from other $k-1$ nodes in the graph. In order to achieve this, we find the optimal partitioning of the graph nodes such that the cost of anonymizing the degree information within each group is minimum. We show that this reduces to a special case of a Generalized Assignment Problem, and we propose a simple yet effective algorithm to solve it. Finally, we introduce an iterated linear programming approach to enforce the realizability of the anonymized degree sequences. The efficacy of the method is assessed through an extensive set of experiments on synthetic and real-world graphs.
\end{quote}
\end{abstract}

\section{Introduction}

Interactions between users in an Online Social Network (OSN) can be abstracted using a graph representation. More complex dynamics, such as interactions over time or across multiple media are successfully captured by means of time-varying~\cite{holme2012temporal} and multi-layer networks respectively~\cite{hristova2014_keep}. 
Applications of these datasets include the analysis of structural properties of social networks~\cite{mislove2007measurement}, the investigation of the dynamics of information spreading in social media~\cite{kwak2010twitter}, and the generation of personalized recommendations for online systems~\cite{andersen2008trust}. However, there is an increasing concern for the privacy implications related to the management, mining and distribution of these datasets.

\begin{figure*}[t]
\centering
\includegraphics[width=.75\linewidth]{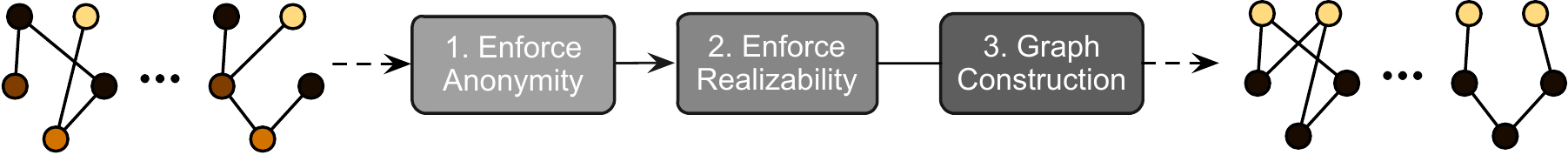}
\caption{The proposed anonymization framework: given a time-varying graph. (1) the temporal degree sequence of each node is anonymized, (2) the resulting degree sequences are checked to ensure that each temporal slice of the anonymized graph is realizable (2), and finally the anonymized time-varying graph is constructed. Here the colors indicate the anonymity groups.}
\label{pipeline}
\end{figure*}

It has been shown that simply generating a random identifier to label the nodes of the graphs does not guarantee privacy~\cite{backstrom2007wherefore}. In fact, an attacker may be able to identify the nodes of a graph simply by collecting information from external sources about their interactions. For example, if the attacker knows that the target individual interacts with a certain number of other users in the network, and that number turns out to be unique for that individual, this piece of information alone is enough to identify the user among all the nodes of the network. 
While the problem of privacy preservation for digital data has been extensively studied in the literature~\cite{meyerson2004complexity,fung2010privacy}, the emergence of large graphs as a tool to model and analyze online social interactions has recently shifted research efforts to the problem of anonymizing structural data~\cite{backstrom2007wherefore,hay2007anonymizing,hay2008resisting,liu2008towards}. In particular, Liu and Terzi~\cite{liu2008towards} provided the first algorithm to guarantee the construction of a $k$-degree anonymous graph. The $k$-anonymity model aims at ensuring that, given a structural query, at least $k$ nodes in the graph satisfy the query. In particular, $k$-degree anonymity guarantees that each node of the graph shares the same degree of at least $k$ other nodes.

Although many networks are naturally modeled as dynamic systems, in most studies the temporal dimension is usually abstracted to produce an aggregated \emph{static} graph. Despite giving an overall picture of the structure which still allows for some interesting analyses, much of the information is lost in the aggregation, and thus researchers have started to turn their attention to the analysis of the dynamic version of the graphs. However, this calls for novel anonymization techniques that are able to cope with the additional longitudinal dimension.

Inspired by the seminal work of Liu and Terzi~\cite{liu2008towards} on anonymization of (single-layer) graphs, we consider the problem of $k$-degree anonymity in time-varying and multi-layer graphs. While most of the algorithms in the literature attempt to solve the $k$-anonymity problem in single-layer graphs, we are interested in protecting the privacy of users participating in a network, which has a structure that evolves with time~\cite{holme2012temporal}, i.e., a time-varying graph, or with multiple type of links associated with the same pair of users, i.e., a multi-layer graph~\cite{mucha2010community}. The structure of a multi-layer graph can be interpreted as that of a time-varying graph where each temporal slice corresponds to a separate layer and the order of the slices does not matter; in the remainder of the paper we will refer exclusively to time-varying graphs for simplicity.

Note that a na\"{\i}ve approach that enforces $k$-degree anonymity independently in each temporal slice is not sufficient to ensure $k$-degree anonymity for the whole time-varying graph, as it is possible to decrease the level of anonymity $k$ by observing the degree sequences of the nodes through time, i.e., their temporal degree sequences. Thus, we need to ensure that the temporal degree sequence of each node is indistinguishable from that of at least $k-1$ other nodes while preserving as much structure of the original time-varying graph as possible. Fig.~\ref{pipeline} shows the pipeline of the proposed approach. Given a time-varying graph and a desired anonymity level $k$ in input, a first module outputs a new set of anonymous degree sequences by solving $l_1$-norm minimization problem using a simple yet effective solution based on a variation of the k-means algorithm~\cite{jain1988algorithms}. A second module ensures these sequences are realizable~\cite{erdos1960graphs}, i.e., that there exists a temporal slice with the given degree sequence, while a third and final module generates a $k$-anonymous time-varying graph from the anonymized and realizable degree sequences.

We conduct an extensive set of experiments on a number of real-world networks, and we show that it is possible to anonymize large graphs while minimizing the loss of structural information. Moreover, we show that when the temporal slices are structurally correlated, i.e., successive slices show a similar structure, the complexity of the anonymization task decreases. To the best of our knowledge, this is the first work to investigate the problem of $k$-degree anonymity in time-varying and multi-layer graphs.


\section{Related Work}\label{related}
The concept of $k$-anonymity in the graph domain~\cite{hay2007anonymizing,hay2008resisting} was introduced by Hay et al., but it is only with Liu and Terzi~\cite{liu2008towards} that a first algorithm to construct a $k$-anonymous graph is proposed. As their algorithm is designed to work on static graphs, however, if applied on the temporal slices of a time-varying graph it fails to take into account the additional information contained in the temporal dimension, i.e., the size of the anonymity groups in the temporal graph will be lower than that of the individual slices. Moreover, their technique generally requires repeated anonymizations of the graph under increasing levels of structural noise, something that is not computationally feasible when dealing with large time-varying graphs. A number of successive works proposed heuristics to reduce the total running time, thus making it feasible to anonymize large static social networks~\cite{lu2012fast,casas2013algorithm,hartung2014improved}.

Chester et al.~\cite{chester2012anonymizing} considered a scenario in which the level of privacy concern of the different nodes of a network varies, i.e., only a subset of nodes of the networks is anonymized. Other researchers, on the other hand, focused on stricter definitions of $k$-anonymity, where the amount of structural information available to the attacker ranges from the immediate neighborhood of a node to the whole graph structure~\cite{hay2008resisting,zhou2008preserving,zou2009k,cheng2010k,zhou2011k}. However, it is worth noting that the more structural information we take into account during the anonymization process, the more noise we need to add to the original graph, and the less informative the resulting anonymized graph will be.

Some researchers have also started investigating the anonymization of time-varying graphs~\cite{zou2009k,bhagat2010prediction,medforth2011privacy}. Zou et al.~\cite{zou2009k} considered the problem of constructing $k$-automorphic graphs, i.e., graphs where each vertex $v$ cannot be distinguished for $k-1$ symmetric vertices given any structural information. The authors also proposed a way to account for graphs that are periodically republished by replacing the nodes IDs with \emph{generalized vertex IDs}. These are designed in a way that makes it impossible for the attacker to link the structural information of nodes across different temporal slices. However, this comes at the cost of being unable to trace a node along the temporal dimension, thus hindering the analysis of the anonymized network. Bhagat et al.~\cite{bhagat2010prediction} extended the list-based anonymization scheme of~\cite{bhagat2009class} to dynamic graphs by grouping the labels of the graph nodes as to maximize nodes and edges anonymity while minimizing structural information loss. Medforth and Wang~\cite{medforth2011privacy} also studied dynamic graphs, but instead of considering a passive attack as in Zou et al.~\cite{zou2009k}, they accounted for an attacker that can actively influence the degree of the target node by interacting with the network. In contrast with these approaches, our aim is to provide a method for anonymizing the node degrees of sequences of graphs so that the label of a node remains fixed during time and the evolution of interactions of specific nodes can be traced and analyzed.


\section{Problem Definition}\label{problemdefinition}

\begin{figure}[t]
\centering
\subfigure[$t=1$]{\includegraphics[width=0.15\linewidth]{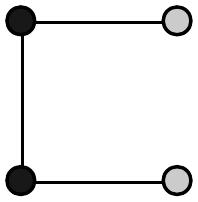}}
\hspace{0.5in}
\subfigure[$t=2$]{\includegraphics[width=0.15\linewidth]{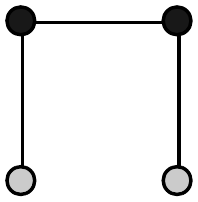}}
\caption{Although each slice $G_t$ satisfies 2-degree anonymity, the temporal degree vectors of the nodes are $[2,2]$,$[2,1]$,$[1,2]$ and $[1,1]$, and thus the time-varying graph does not satisfy 2-degree anonymity.}
\label{toy}
\end{figure}

Let $\mathcal{G} = \lbrace G_1, \cdots, G_T \rbrace$ denote a time-varying graph over a fixed set of vertices $V$, with $|V| = n$. That is, $\mathcal{G}$ is a sequence of undirected and unattributed graphs $G_t(V,E_t)$, with $t=1,\cdots,T$, where $E_t$ is the set of edges active at time $t$. We define the $n \times T$ \emph{degree matrix} $D = \lbrace d_{it} \rbrace$, where $d_{it}$ is the degree of the $i$-th node of $G_t$, and we call the vector $d_{i\LargerCdot} = [d_{i1}, \cdots, d_{iT}]$ the \emph{temporal degree vector}, or temporal degree sequence, of the $i$-th node. Let us also denote with $d_{\LargerCdot t} = [d_{1t}, \cdots, d_{nt}]$ the degree sequence of the $t$-th slice.

Given an arbitrary degree sequence $d_{\LargerCdot t}$, a typical problem is to find a simple graph $G_t$ such that its degree sequence is $d_{\LargerCdot t}$, where an undirected graph is called simple if it has no self loops and has no more than one edge between two vertices. If such a graph exists, the degree sequence is called \emph{realizable}. More formally, Erd\H{o}s and Gallai provide the following necessary and sufficient condition~\cite{erdos1960graphs},
\begin{definition}\label{erdos}
A degree sequence $d_{\LargerCdot t}$, such that $d_{1t} \geq \cdots \geq d_{nt}$, is realizable if and only if $\sum_{i=1}^n d_{it}$ is even and for each $1 \leq j \leq n$ it holds that
\begin{equation}\label{realizable}
\sum_{i=1}^j d_{it} \leq j(j-1) + \sum_{i=j+1}^n \min(d_{it},j)
\end{equation}
\end{definition}
Note that in this paper we will work only with undirected and unattributed graphs, i.e., the adjacency matrix of each $G_t$ is symmetric and binary.
Our approach can be  extended to deal with directed edges by solving two separate anonymization problems, one for the in-degree and one for the out-degree, and ensuring that the resulting degree sequences are realizable~\cite{erdos2010simple}.

\subsection{Temporal Graph Anonymity}

Our goal is to create an anonymized version of a time-varying graph $\mathcal{G}$ such that each node is indistinguishable from $k-1$ other nodes based on its temporal degree vector. Recall that a vector of natural numbers $v$ is \emph{k-anonymous} if, for every entry $v_i$, there exist at least $k-1$ entries $v_j$ with the same value. Based on this definition, Liu and Terzi~\cite{liu2008towards} introduce the concept of a \emph{k-degree anonymous graph}, i.e., a graph whose degree sequence is $k$-anonymous. For example, the vector $v=[1,1,1,2,2]$ is 2-degree anonymous. However, note that the sum of its entries is odd, and thus it does not correspond to any $k$-degree anonymous graph. On the other hand, a complete graph with four nodes is 4-degree anonymous, and its degree sequence is $d_{\LargerCdot t}=[3,3,3,3]$.

Let the $n \times T$ matrix $V$ denote a set of $n$ vectors of length $T$. We say that $V$ is a set of $k$-anonymous vectors if for each row $v_{i \LargerCdot}$, there are at least $k-1$ vectors $v_{j \LargerCdot}$ such that $v_{it}=v_{jt}$, for each $t=1,\cdots,T$. We define a \emph{k-degree anonymous time-varying graph} as follows:
\begin{definition}
A time-varying graph $\mathcal{G}$ is $k$-degree anonymous if its degree matrix $D$ defines a set of $k$-anonymous vectors.
\end{definition}
Note that simply requiring that each temporal slice $G_t$ is $k$-degree anonymous is not sufficient to ensure $k$-degree anonymity in the time-varying graph, as Fig.~\ref{toy} shows.

It should be clear that, independently from the value of $k$ and the structure of $\mathcal{G}$, there always exists a solution to this problem. In fact, the time-varying graph where each $G_t$ is either completely connected or completely disconnected is $k$-degree anonymous, for any $1 \leq k \leq n$. However, such a solution is far from being optimal, in the sense that, in order to anonymize the graph, we need to introduce a large amount of structural noise that inevitably obfuscates the characteristics of the original graph that we aim to preserve. Recall that the edit distance between two graphs is defined as the least-cost edit operations sequence that transforms a graph into another one~\cite{bunke1997relation}. Hence, the optimal solution is to look for the $k$-anonymous graph $\widetilde{\mathcal{G}}$ such that the edit distance between $\mathcal{G}$ and $\widetilde{\mathcal{G}}$ is minimized.


We propose to approximate this problem as follows. We first look for the $k$-degree anonymous degree matrix $\widetilde{D}$ such that the distance
\begin{equation}\label{initial_problem}
dist(D,\widetilde{D}) = \frac{1}{2} \sum_{i=1}^{n} || d_{i \LargerCdot} - \widetilde{d}_{i \LargerCdot} ||_1
\end{equation}
is minimized, where $\widetilde{d}_{i \LargerCdot}$ is the temporal degree sequence of the $i$-th node of $\widetilde{\mathcal{G}}$, and $|| x ||_1$ denotes the $l_1$ norm of the vector $x$. Then, we construct the anonymized graph $\widetilde{\mathcal{G}}$ with degree matrix $\widetilde{D}$ such that the structure of the original graph and its anonymized counterpart are as similar as possible. Note that Eq.~\ref{initial_problem} defines a lower bound on the edit distance, i.e., there is no $\widetilde{\mathcal{G}}$ with degree matrix $\widetilde{D}$ such that its edit distance from $\mathcal{G}$ is smaller than $dist(D,\widetilde{D})$. Thus, we try to minimize the edit distance by first looking for a $k$-anonymous degree matrix $\widetilde{D}$ that is as close as possible to the original one in the $l_1$ sense, i.e., a minimizer of the lower bound, and then building a graph with degree matrix $\widetilde{D}$ such that the edge overlap with the original graph is the largest.

In the next section we show that an optimal solution $\widetilde{D}$ can be found by solving a particular type of generalized assignment problem. However, as we will see, the solution of this problem is not guaranteed to define a set of realizable degree sequences, and thus we will need an additional mechanism to make sure that the anonymized degree sequences of the temporal slices are all realizable. It is worth noting that we allow simultaneous edge additions and deletions. This has been shown to yield a better approximation of the original graph than the case where only edge additions are allowed~\cite{liu2008towards}.

\section{Anonymization Framework}\label{optimization}

Recall that our goal is to partition the graph nodes into groups of size at least $k$, such that each node of a group shares the same temporal degree vector. In addition to this, we want to ensure that a minimal number of structural changes is needed to create these groups. More formally, we are looking for a grouping of the nodes such that the sum of the $l_1$ distances between the temporal degree sequences of the original and the anonymized graph is minimized. This is in general a non-convex and NP-hard problem~\cite{meyerson2004complexity}.

\subsection{Enforcing Anonymity}
We propose to solve an approximation of the above problem based on a variation of the k-means algorithm~\cite{jain1988algorithms} in a $l_1$ metric space. The $k$-means algorithm is a two-step method to cluster points in a $l_2$ metric space. The objective of $k$-means is to minimize the squared deviations from the group centroids, which is equal to the average pairwise squared $l_2$ distances between the points and the centroid of the cluster. Given an initial set of $k$ centroids, the algorithm proceeds by alternating an \emph{assignment step}, where the points are assigned to the closest centroid, and an \emph{update step}, where the new centroids of the clusters are computed. In particular, if the points lie in $l_2$ space the centroid of a cluster is defined as the mean of the points belonging to it. It is important to note that $k$-means can transform a potentially non-convex problem into two convex sub-problems, namely the assignment and the update steps, for which a globally optimal solution can be found.

In contrast with $k$-means, here we need to minimize the average absolute deviation of the original temporal degree sequences from the temporal degree sequences of the anonymized graph. In other words, in our case the centroid of a group is defined as the generalized set median, i.e., the point that minimizes the $l_1$ distance from the points of the group. Let $m = \lceil \frac{n}{k} \rceil$ denote the number of anonymity groups.
We start by defining a random partition of the $n$ nodes into $m$ groups, and we compute the $m \times T$ matrix $P = p_{ij}$ whose rows are the groups medians, i.e., $p_{i \LargerCdot}$ is the set median defined by the $d_{j \LargerCdot}$ assigned to the $i$-th group.

The \emph{assignment step}, on the other hand, requires solving a Generalized Assignment Problem (GAP) with lower bounds~\cite{hamada2011hospitals,krumke2013generalized}. In fact, with respect to the standard version of $k$-means, we have the additional constraint that each group has to hold at least $k$ members in order to guarantee $k$-anonymity. In the classical version of GAP, the goal is to find an optimal assignments of $n$ items to $m$ bins, such that each bin cannot contain more than a fixed amount of items, and the assignment of an item to a bin is associated with a certain cost. 
In our case the size of an item is 1, and the problem is also known as Seminar Assignment Problem (SAP)~\cite{hamada2011hospitals,krumke2013generalized}. Both GAP and SAP are known to be NP-hard~\cite{krumke2013generalized}. However, when the number of bins is fixed, both problems can be written as linear programs and an optimal solution can be found in polynomial time using a standard linear program solver, such as the simplex method or interior point methods. The solution of the linear program assigns $k$ optimal nodes to each group, while the $n-mk$ residual nodes need to be assigned separately. In other words, we are left with an unconstrained assignment problem, where we can assign the residual nodes greedily, i.e., each residual node $i$ to the $l_1$ closest median $j$. We refer to the algorithm solving the assignment step as \algoname{OptimalAssignment}.

Given the optimal assignment, the \emph{update step} consists in calculating the new medians of the clusters. We iterate these steps until convergence, i.e., until the assignment matrix does not change or a user-defined maximum number of iterations is met. Since the algorithm will find a local minimum of the cost function that depends on the initial random partition of the nodes, we repeat the whole procedure a number of times and we select the minimum cost solution. In the remainder of the paper we refer to this as the \algoname{DegreeAnonymization} algorithm. Finally, note that while the original $k$-anonymity problem was non-convex and NP-hard, here we solve two convex sub-problems for which a global optimum can be found.

\subsubsection{Anonymizing Very Large Graphs}
In order to handle large time-varying graphs, for example describing the social interactions between the users of an OSN, we need a fast and efficient way to solve the GAP problem in the assignment step. Krumke and Thielend~\cite{krumke2013generalized} proposed to map GAP to a minimum cost flow problem and using the Enhanced Capacity Scaling algorithm (ECS) to solve it~\cite{krumke2013generalized}. More specifically, the problem of assigning $n$ nodes to $m$ groups can be mapped on a flow network with $|V|=m+n$ nodes and $|E|=mn+m+n$ edges. However, the complexity of the ECS is $O(|E| \log(|V|) (|E|+|V| \log(|V|))$, which makes it unfeasible when applied to very large graphs.

\setlength{\algomargin}{1.3em}
\begin{algorithm}[!t]\label{heuristic}
\small
\BlankLine
\Indm
\In{A degree matrix $D$, a set median matrix $P$ and a desired anonymity $k$}
\Out{An optimal assignment matrix $A$}
\BlankLine
\Indp
\For{$i \leftarrow 0$ \KwTo $l$}
{
	$P \leftarrow$ scramble the rows of $P$\;
	$ A \leftarrow m \times n $ all-zero matrix\;
	\ForEach{$r \in P.rows$}
	{
		$d \leftarrow$ compute distance from $r$ to $D$\;
		$nn \leftarrow$ sort nodes for increasing $d$\;
		\ForEach{$c \in nn$}
		{	
			\If{$k$ nodes have been assigned to $r$}
			{
				\KwBreak\;
			}
			\If{$c$ has not been assigned yet}
			{
				$A(r,c) \leftarrow 1$\;
			}
		}
	}
	$iterA[i] \leftarrow A$\;
}
$A \leftarrow $ select $iterA[i]$ with minimum cost\;
\caption{\algoname{GreedyAssignment}}
\end{algorithm}

We propose to replace the \algoname{OptimalAssignment} algorithm with a less computationally demanding heuristic. The pseudocode of \algoname{GreedyAssignment} is shown in Algorithm~\ref{heuristic}. The algorithm starts by iterating over the set medians, i.e., the rows of $P$, in random order. For each median $r$, it computes the $l_1$ distance from $r$ to the temporal degree vectors in $D$. Then, it assigns to $r$ the first $k$ nodes $c$ that have not been previously assigned to another median. When the anonymity set is complete, i.e., $k$ nodes have been assigned to $r$, the next median is processed. The assignment procedure is repeated $l$ times, each time starting with a different random permutation of the $p_i$s, and the minimum cost assignment is returned. Note that the complexity of our heuristic is $O(l m n\log(n))$, which makes it possible to apply it to very large networks, as opposed to the approach of Krumke and Thielend~\cite{krumke2013generalized}.

\subsection{Enforcing Realizability}

While \algoname{DegreeAnonymization} will return a matrix $\widetilde{D}$ whose columns are $k$-anonymous degree sequences, these are not guaranteed to be realizable.
In Liu and Terzi~\cite{liu2008towards}, when a $k$-anonymous degree sequence is not realizable the authors propose to modify the original graph by adding uniform structural noise in the form of additional edges. The anonymization and noise addition are alternated until a realizable $k$-anonymous degree sequence is returned, while the convergence is guaranteed by noting that, in the worst case, the repeated addition of edges will result in a complete graph, which is by definition $k$-anonymous.

However, while in Liu and Terzi~\cite{liu2008towards} the problem is that of anonymizing a single unattributed graph, in this paper we intend to anonymize a time-varying graph. Not only having $T$ different degree sequences to anonymize there is a higher probability of generating one which is not realizable, but it is also not possible to locally alter the structure of the original temporal slices and the $k$-anonymity group memberships, while ensuring that the $k$-anonymity across the whole time-varying graph is preserved. For this reason, we decide to locally operate on the non-realizable degree sequences in the following way. 

Recall that a degree sequence $\widetilde{d}_{\LargerCdot t}$ is realizable if $\sum_{i=1}^n\widetilde{d}_{it}$ is even and if it satisfies Eq.~\ref{realizable}. Let us first focus on Eq.~\ref{realizable}. Given a temporal slice $\widetilde{G}_t$ and a non-realizable $k$-anonymous degree sequence $\widetilde{d}_{\LargerCdot t}$, we want to project $\widetilde{d}_{\LargerCdot t}$ to the nearest $k$-anonymous degree sequence $\widetilde{d}^*_{\LargerCdot t}$ that satisfies this equation. The function that we want to minimize is the $l_1$ norm between $\widetilde{d}_{\LargerCdot t}$ and $\widetilde{d}^*_{\LargerCdot t}$. That is, our problem can be written as
\begin{equation}\label{initial}
\begin{aligned}
& \text{minimize}
& & || \widetilde{d}^*_{\LargerCdot t} - \widetilde{d}_{\LargerCdot t} ||_1 \\
& \text{subject to}
& & A\widetilde{d}^*_{\LargerCdot t} \leq b(\widetilde{d}^*_{\LargerCdot t}) \\
&&& \widetilde{d}^*_{\LargerCdot t} \geq 0 \\
\end{aligned}
\end{equation}
where $A$ and $b(\widetilde{d}^*_{\LargerCdot t})$ denote respectively the matrix of constraints and the vector of constant terms defined by Eq.~\ref{realizable}, i.e., the $j$-th element of $b(\widetilde{d}^*_{\LargerCdot t})$ is $j(j-1) + \sum_{i=j+1}^n \min(\widetilde{d}^*_{it},j)$. Note, however, that in this formulation of the problem we are allowing all the $n$ components of $\widetilde{d}^*_{\LargerCdot t}$ to vary, and thus the $k$-anonymity of $\widetilde{d}_{\LargerCdot t}$ is not guaranteed to be preserved. Instead, we propose to minimize $|| S\delta_{\LargerCdot t} -S\delta^*_{\LargerCdot t} ||_1$, where $\delta_{\LargerCdot t}$ is the $m$-elements vector such that $\delta_{it}$ is the degree of the nodes in the $i$-th $k$-anonymity group, $m$ denotes the number of groups and $S$ is the $n \times m$ assignment matrix such that the element $S_{ij}=1$ if the $i$-th node belongs to the $j$-th group. 
This can be transformed into a linear program as follows. We first introduce the slack variables $x^+ - x^- = \delta^*_{\LargerCdot t} - \delta_{\LargerCdot t} $, where $x^+,x^- \geq 0$ and we rewrite the objective function as $\mathbf{1}^\top Sx^+ + \mathbf{1}^\top Sx^-$, where $\mathbf{1}$ denotes the all ones-vector. Moreover, recall that $b(\delta^*_{\LargerCdot t})=j(j-1) + \sum_{i=j+1}^n \min(\delta^*_{it},j)$, and thus the constraints are not linear. In order to linearize the constraints, we solve an iterated linear program where we fix an initial value $b(\delta_{\LargerCdot t})$, and we alternate the computation of the optimal $\delta^*_{\LargerCdot t}$ and $b(\delta^*_{\LargerCdot t})$, until convergence. That is, during the $i$-th iteration we solve the linear program
\begin{equation}\label{linear}
\begin{aligned}
& \text{minimize}
& & \mathbf{1}^\top Sx^+ + \mathbf{1}^\top Sx^- \\
& \text{subject to}
& & Ax^+ - Ax^- \leq b(\delta^*_{\LargerCdot t}) - A\delta_{\LargerCdot t} \\
&&& x^+_i, x^-_i \geq 0 \\
\end{aligned}
\end{equation}
where $\delta^*_{\LargerCdot t}$ is the optimal solution at the $(i-1)$-th iteration.

While finding a solution for Eq.~\ref{linear} requires to solve an Integer Linear Program, we propose to solve an alternative problem where the matrix of constraints is totally unimodular and the feasible solutions are guaranteed to be integer-valued. Let us write $A=LS$, where $L$ is the lower triangular matrix, i.e., such that $L_{ij}=1$ if $i \geq j$, $L_{ij}=0$ otherwise. Since $L$ is invertible, we can rewrite the problem in Eq.~\ref{linear} as
\begin{equation}\label{alternative_problem}
\begin{aligned}
& \text{minimize}
& & \mathbf{1}^\top Sx^+ + \mathbf{1}^\top Sx^- \\
& \text{subject to}
& & Sx^+ - Sx^- \leq L^{-1}b(\delta^*_{\LargerCdot t}) - S\delta_{\LargerCdot t} \\
&&& x^+_i, x^-_i \geq 0 \\
\end{aligned}
\end{equation}
\begin{theorem}\label{theorem_realizability}
A solution of the linear program in Eq.~\ref{alternative_problem} satisfies Eq.~\ref{realizable}.
\end{theorem}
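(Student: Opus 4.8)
The plan is to show that the single block of inequalities defining the feasible region of Eq.~\ref{alternative_problem}, once the slack-variable substitution is undone, is merely a componentwise upper bound on the node-degree vector $S\delta^*_{\LargerCdot t}$, and that left-multiplying this bound by the \emph{nonnegative} matrix $L$ reproduces exactly the Erd\H{o}s--Gallai inequality of Eq.~\ref{realizable}. The whole argument thus hinges on the sign structure of $L$, so I would start from the constraint of Eq.~\ref{alternative_problem} and eliminate the auxiliary variables $x^+,x^-$.

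First I would substitute the defining relation $x^+ - x^- = \delta^*_{\LargerCdot t} - \delta_{\LargerCdot t}$ into that constraint. Writing $S(x^+ - x^-) = S\delta^*_{\LargerCdot t} - S\delta_{\LargerCdot t}$ and cancelling the common $S\delta_{\LargerCdot t}$ term on both sides collapses the constraint to the clean bound
\begin{equation*}
S\delta^*_{\LargerCdot t} \;\le\; L^{-1}\,b(\delta^*_{\LargerCdot t}),
\end{equation*}
so that every feasible pair $(x^+,x^-)$ corresponds to a node-degree vector $S\delta^*_{\LargerCdot t}$ satisfying this inequality componentwise.

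Next I would invoke that $L$ is the lower-triangular all-ones matrix, hence has all entries in $\{0,1\}$ and in particular $L \ge 0$. Left-multiplying a componentwise inequality by a nonnegative matrix preserves its direction, so applying $L$ to both sides gives $L\,S\delta^*_{\LargerCdot t} \le L\,L^{-1}b(\delta^*_{\LargerCdot t}) = b(\delta^*_{\LargerCdot t})$. Since $A = LS$ this reads $A\delta^*_{\LargerCdot t} \le b(\delta^*_{\LargerCdot t})$; and because the $j$-th entry of $L\,S\delta^*_{\LargerCdot t}$ is precisely the partial sum of the first $j$ (sorted) node degrees, while the $j$-th entry of $b(\delta^*_{\LargerCdot t})$ is the quantity $j(j-1)+\sum_{i=j+1}^{n}\min(\cdot,j)$ of Eq.~\ref{realizable}, this chain of inequalities is exactly the Erd\H{o}s--Gallai condition.

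The step I expect to be the crux — and the one I would flag explicitly — is the contrast between the signs of $L$ and $L^{-1}$. The passage from the $A$-constraint of Eq.~\ref{linear} to the $S$-constraint of Eq.~\ref{alternative_problem} runs through $L^{-1}$, whose sub-diagonal entries equal $-1$; multiplying an inequality by $L^{-1}$ therefore need \emph{not} preserve its direction, so the two feasible regions are not equal and Eq.~\ref{alternative_problem} is in fact a (weakly) tighter relaxation. The argument above establishes only the inclusion actually required, namely that feasibility for Eq.~\ref{alternative_problem} forces the realizability inequality, and this inclusion is exactly the one that uses $L \ge 0$. I would close by observing that the conclusion holds at every feasible point — a fortiori at the optimum returned at convergence of the iterated program, where the fixed right-hand side $b$ coincides with $b(\delta^*_{\LargerCdot t})$ — which is all the theorem asserts.
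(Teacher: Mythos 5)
Your proof is correct and follows essentially the same route as the paper's: both arguments reduce to the observation that left-multiplying the componentwise constraint of Eq.~\ref{alternative_problem} by the nonnegative lower-triangular matrix $L$ recovers the constraint $A\delta^*_{\LargerCdot t} \leq b(\delta^*_{\LargerCdot t})$ of Eq.~\ref{linear}, hence Eq.~\ref{realizable}. The paper packages this as slack-variable bookkeeping ($y \geq 0$, $z = Ly$, so $z \geq 0$), whereas you substitute $x^+ - x^- = \delta^*_{\LargerCdot t} - \delta_{\LargerCdot t}$ and multiply directly --- a cleaner presentation of the identical key step, and your explicit remark that $L^{-1}$ has negative entries (so the two feasible regions are not equal, only nested) makes precise what the paper leaves implicit.
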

\begin{proof}
We need to prove that a feasible solution for the problem in Eq.~\ref{alternative_problem} is also feasible for the problem in Eq.~\ref{linear}. To this end, let us rewrite Eq.~\ref{linear} as
\begin{equation}\label{problem_slack}
\begin{aligned}
& \text{minimize}
& & \mathbf{1}^\top Sx^+ + \mathbf{1}^\top Sx^- \\
& \text{subject to}
& & Sx^+ - Sx^- + L^{-1} z = L^{-1} b(\delta^*_{\LargerCdot t}) - S\delta_{\LargerCdot t} \\
&&& x^+_i, x^-_i, z_i \geq 0 \\
\end{aligned}
\end{equation}
where we introduced the slack variables $z_i \geq 0$. 
By iteratively updating the value of the $x^-_i$s we can ensure that the inequality 
\begin{equation}\label{constraint}
Sx^+ - Sx^- \leq L^{-1}b(\delta^*_{\LargerCdot t}) - S\delta_{\LargerCdot t} 
\end{equation}
holds. As a consequence, we have that
\begin{equation}\label{slack}
L^{-1}z = b(\delta^*_{\LargerCdot t}) - S\delta_{\LargerCdot t} - Sx^+ + Sx^-\geq 0
\end{equation}

In order to show that a solution that satisfies Eq.~\ref{constraint} will also feasible for the problem in Eq.~\ref{problem_slack}, we need to prove that, whenever the former holds, we have $z \geq 0$. Let us introduce a slack variable $y \geq 0$ and rewrite Eq.~\ref{constraint} as
\begin{equation}\label{constraint_slack}
Sx^+ - Sx^- + y = L^{-1}b(\delta^*_{\LargerCdot t}) - S\delta_{\LargerCdot t} 
\end{equation}
From Eqs.~\ref{slack} and \ref{constraint_slack} it follows that
\begin{equation}
y=L^{-1}z \geq 0
\end{equation}
Finally, since $y \geq 0$ and $z=Ly$, $z$ is a sum of non-negative values and thus $z \geq 0$.
\end{proof}
We propose a fast and effective pivot selection algorithm to find a feasible solution for the problem in Eq.~\ref{alternative_problem}. Our goal is that of setting the values of $x^+_i$ and $x^-_i$ as to satisfy the $i$-th inequality constraint, for all $1 \leq i \leq m$. To start, we initialize $x^+$ and $x^-$ as the all-zero vectors. Given the $i$-th constraint, note that we can not have $x^+_i > 0$ and $x^-_i > 0$ at the same time. Thus, when the $i$-th constraint is violated, i.e., $(L^{-1}b(\delta^*_{\LargerCdot t}) - S(\delta_{\LargerCdot t} + x^+ - x^-))_i < 0$, we set $x^-_i$ so that the inequality is reversed, but we let $x^+_i=0$. In other words, we selectively reduce the degree of those groups that violate the constraints. More precisely, the degree of a group is reduced by an amount proportional to the total magnitude of the violated constraints for that group. We then propagate the reduction to the remaining group so as to maintain the order of the degree sequence. We omit the pseudocode of the \algoname{EnforceRealizability} algorithm due to space constraints.

Let $d^*_{\LargerCdot t}$ be the complete degree sequence output by \algoname{EnforceRealizability}. As a last step to ensure that $d^*_{\LargerCdot t}$ is a realizable degree sequence, we need to make sure that $\sum_{i=1}^n d^*_{it}$ is even. To this end, we pick the smallest group with odd degree sum, and we either increase or decrease the degrees of each of its members by 1. More specifically, we pick the operation (increase or decrease) that yields a feasible solution with minimal $l_1$ distance from the original degree sequence.

\begin{figure*}[!t]
\centering
\subfigure{\includegraphics[width=.24\linewidth]{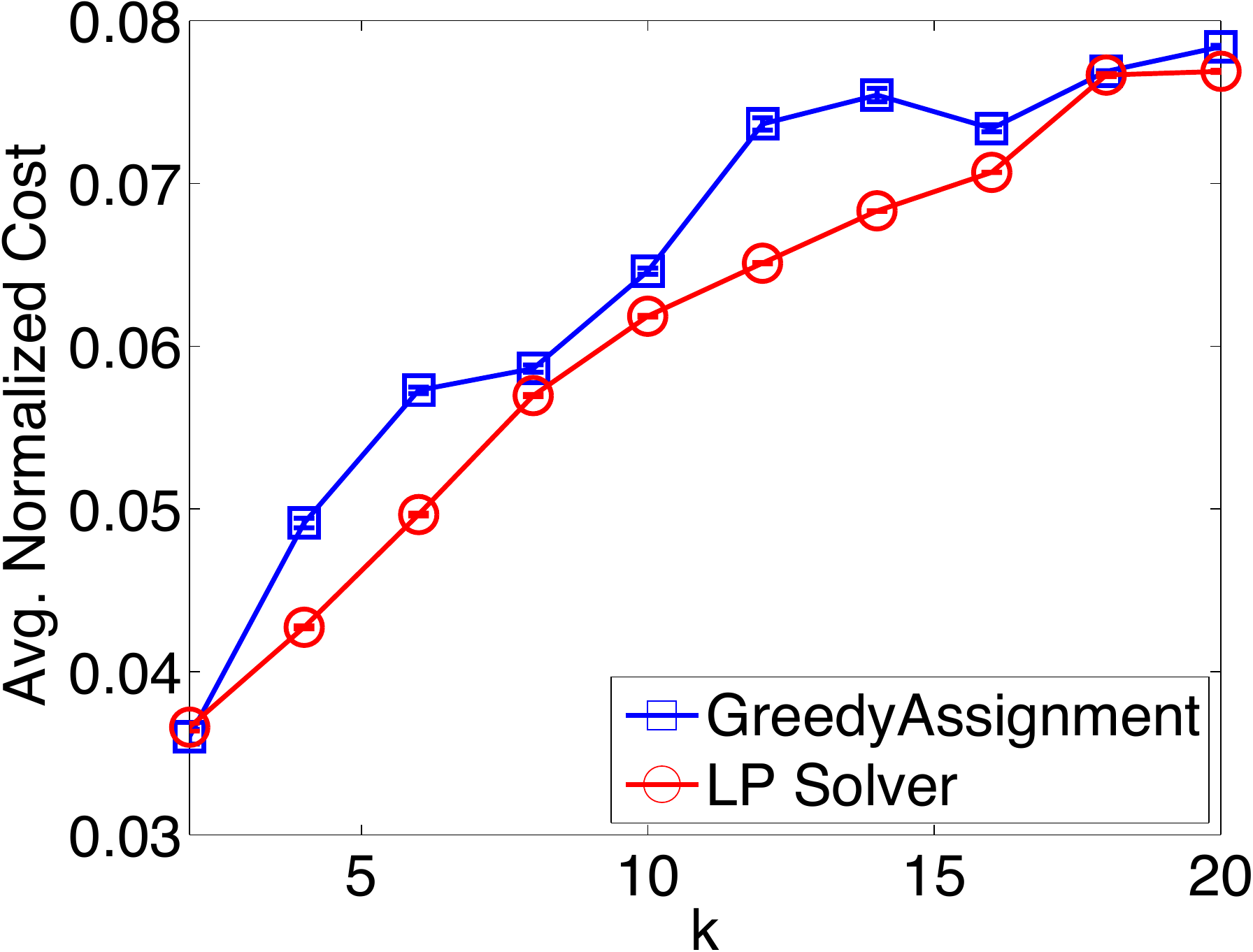}}
\hspace{0.2in}
\subfigure{\includegraphics[width=.232\linewidth]{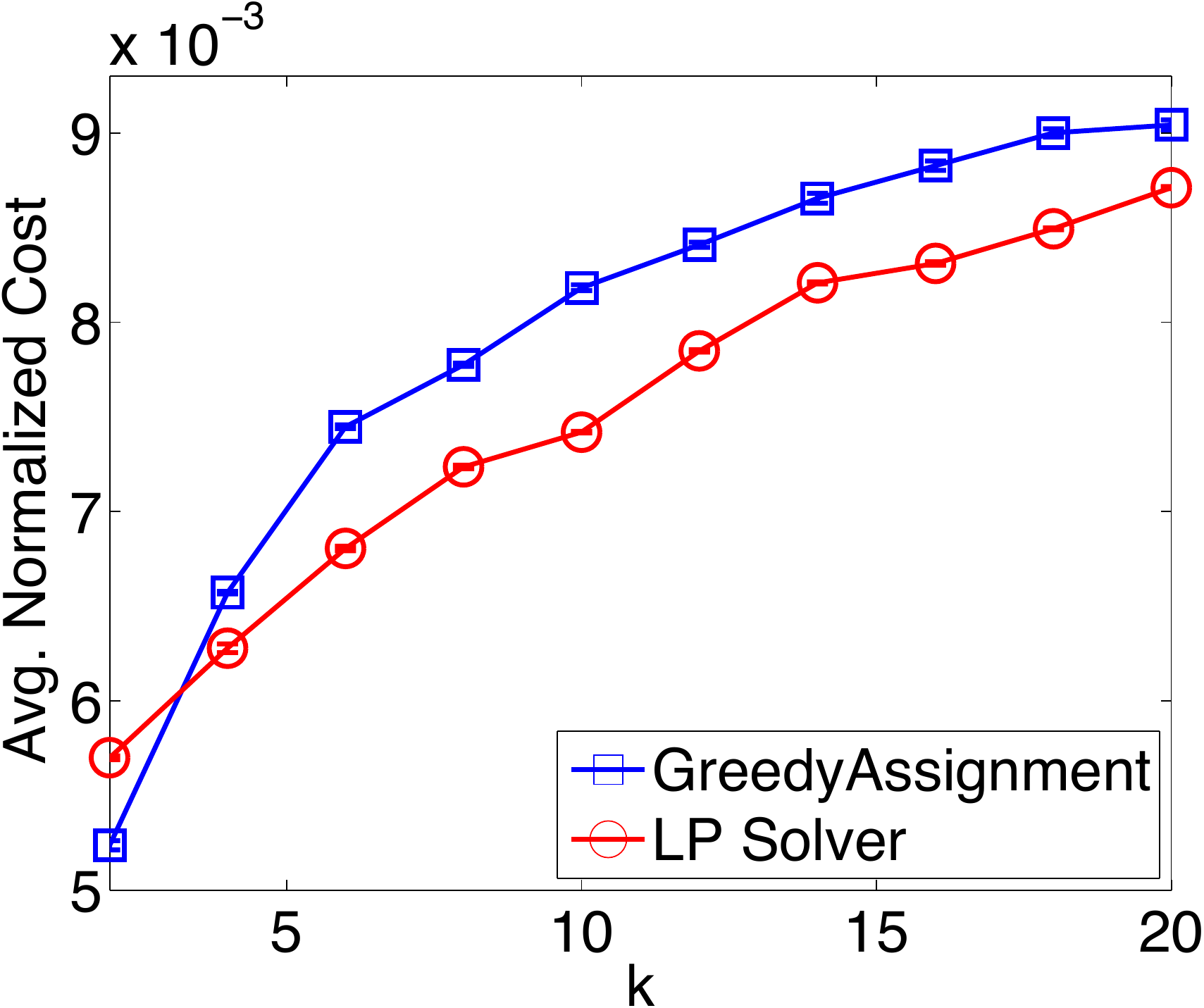}}
\hspace{0.2in}
\subfigure{\includegraphics[width=.244\linewidth]{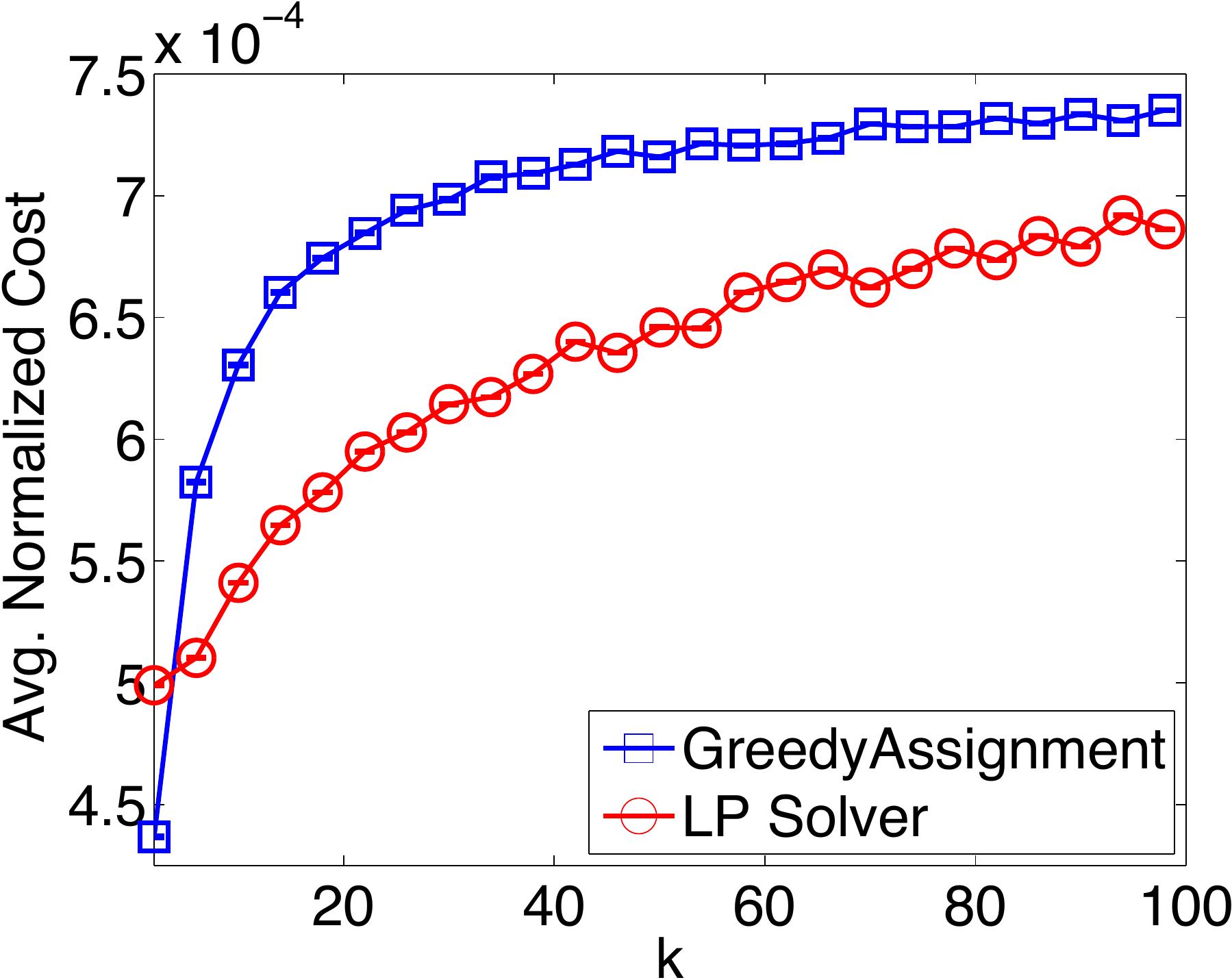}}
\subfigure[MIT]{\includegraphics[width=.24\linewidth]{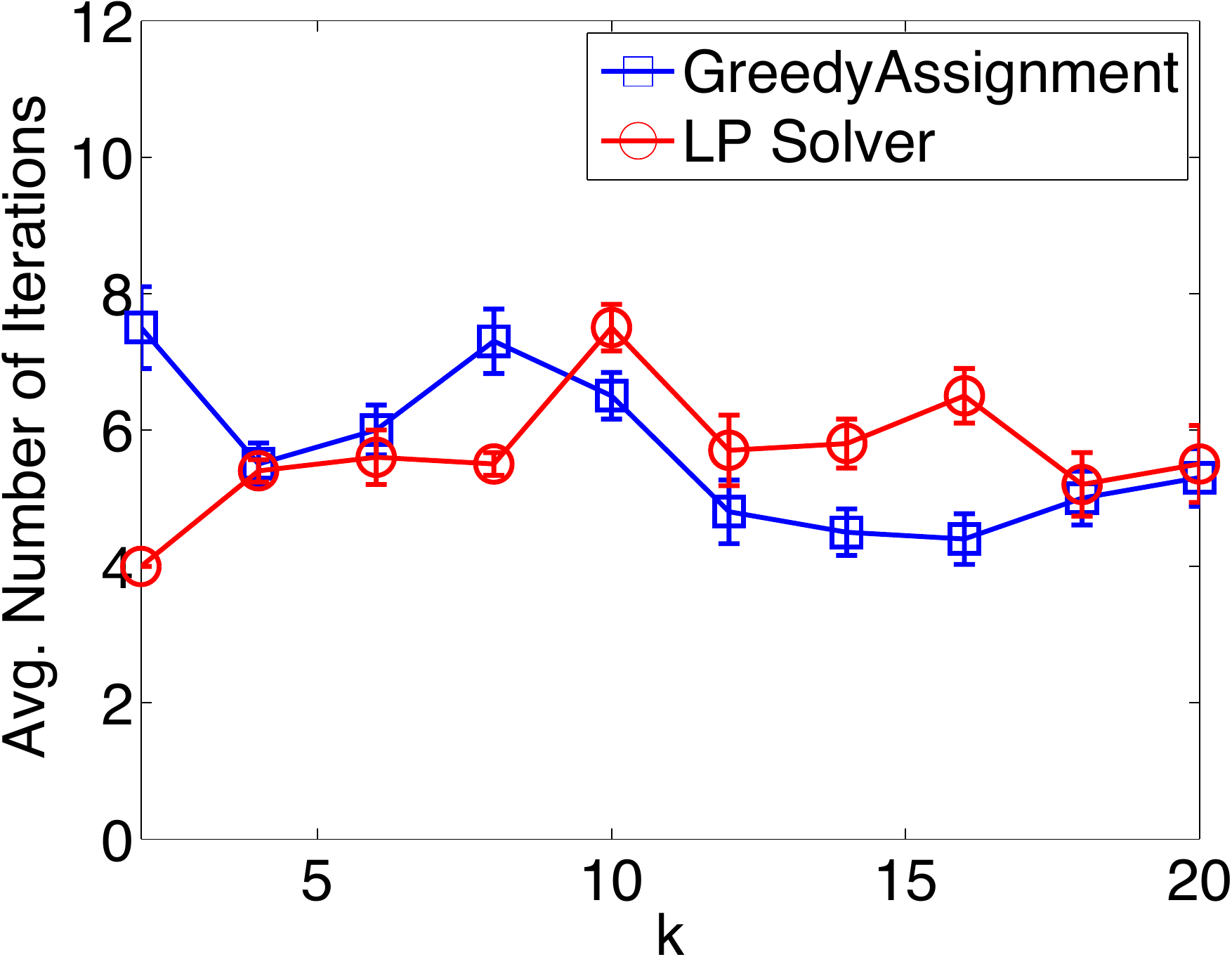}}
\hspace{0.2in}
\subfigure[Enron M]{\includegraphics[width=.24\linewidth]{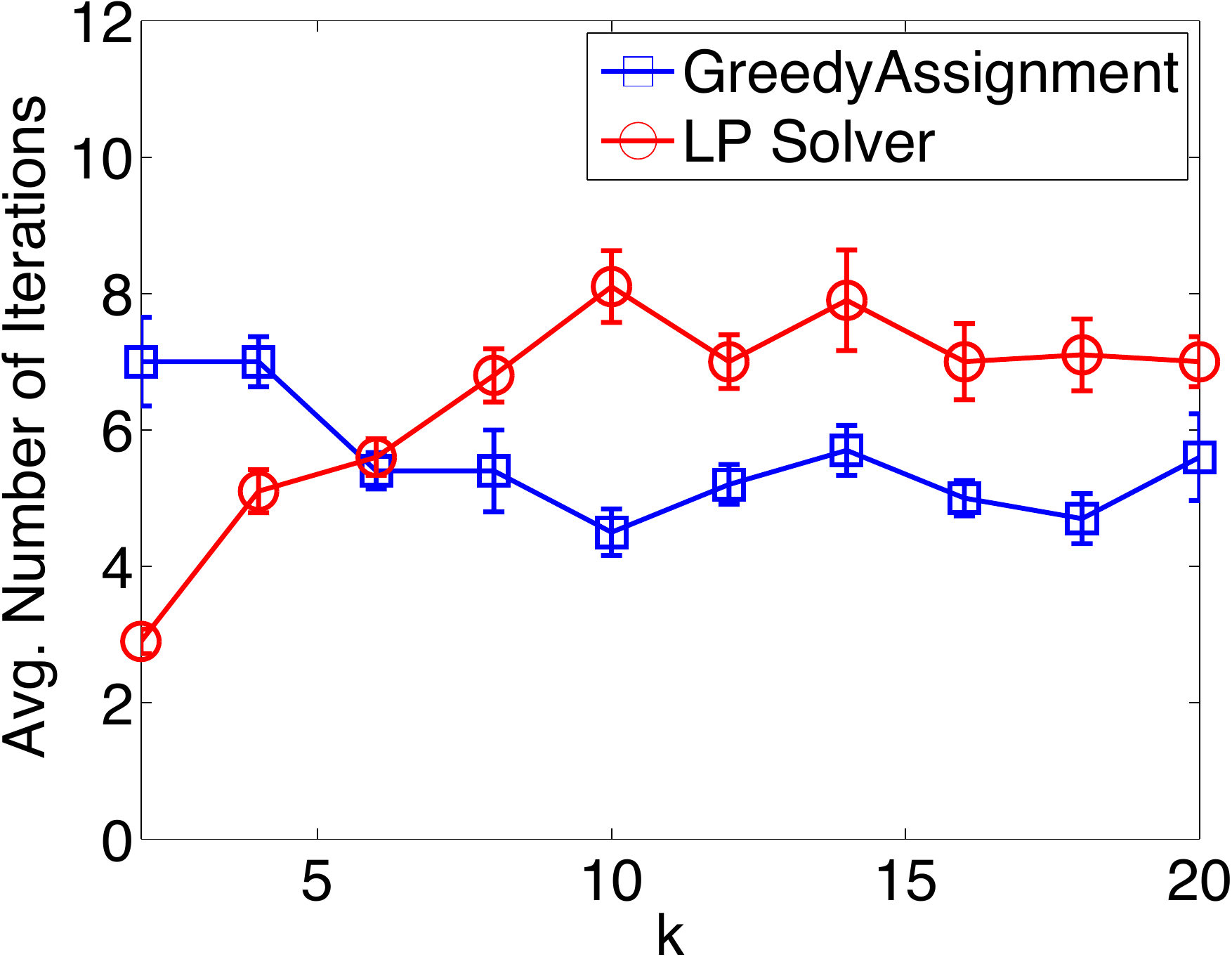}}
\hspace{0.2in}
\subfigure[Irvine]{\includegraphics[width=.24\linewidth]{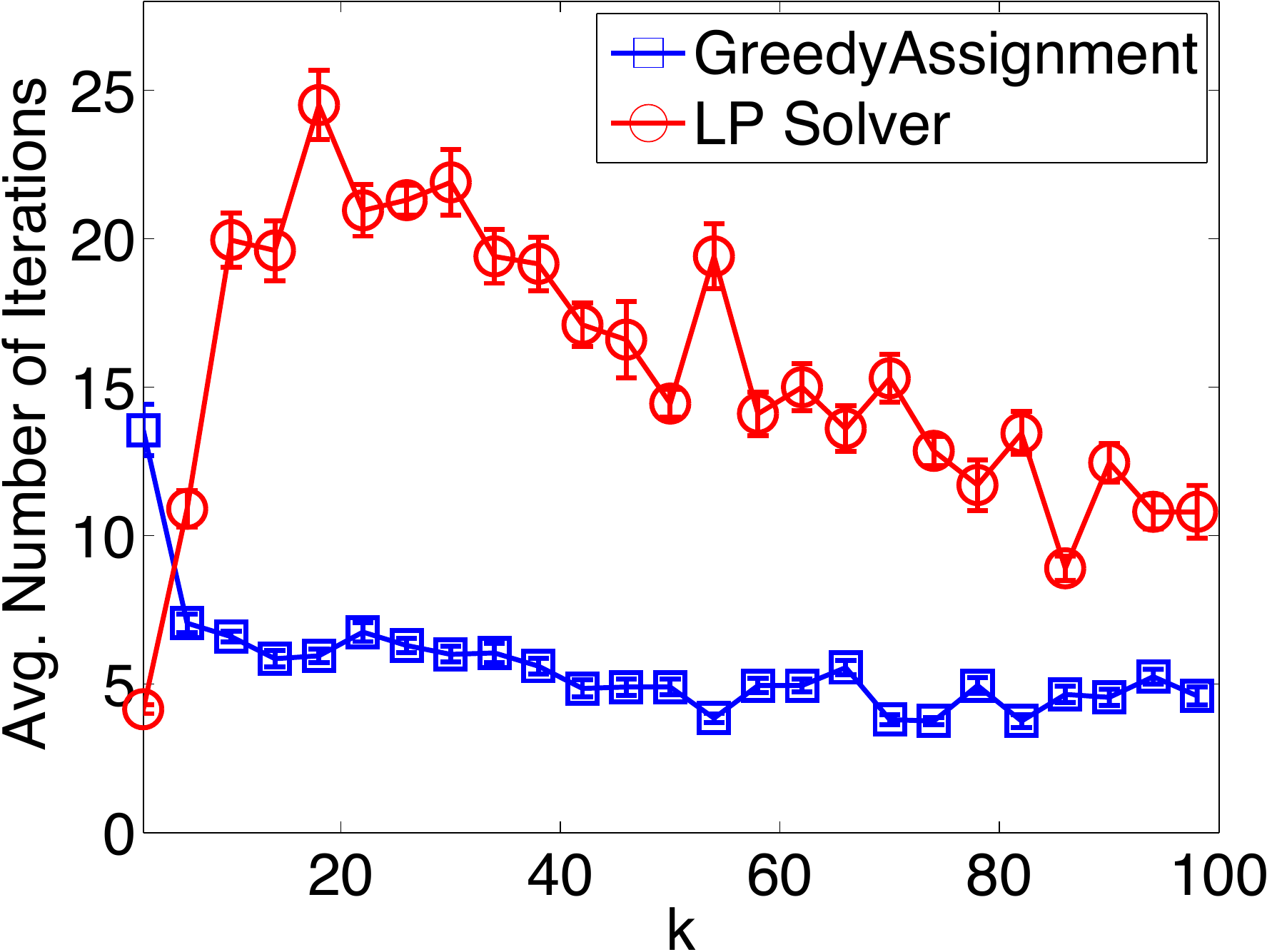}}
\vspace{-0.1in}
\caption{Average anonymization cost (top) and average number of iterations (bottom) needed to converge by \algoname{DegreeAnonymization} using \algoname{GreedyAssignment} and the LP solver, for varying values of $k$.}
\label{greedy_anonymization_iter}
\end{figure*}

\subsection{Graph Construction}\label{construction}
With the anonymous and realizable degree matrix to hand, we can proceed to construct the anonymized time-varying graph. More specifically, we build each temporal slice independently using the relaxed graph construction method of Liu and Terzi~\cite{liu2008towards}. Given a realizable degree sequence $d^*_{\LargerCdot t}$, we use the \algoname{Priority} algorithm to build a graph $\widetilde{G_t}$, such that the edge overlap with the original graph is as large as possible. The \algoname{Priority} algorithm creates a degree anonymous graph with a high edge intersection with the original graph by prioritizing the construction of edges between vertices that share an edge in the original graph. Note that in~\cite{liu2008towards} when the input sequence is not realizable the \algoname{Priority} algorithm needs to call the \algoname{Probing} procedure. This in turn perturbates the structure of the original graph and calls again \algoname{Priority} and attempts to create a new anonymous and realizable degree sequence. Here, instead, we are guaranteed that the input sequence is realizable, and thus there is no need to run \algoname{DegreeAnonymization} again.

\section{Experiments}\label{experiments}
We evaluated our framework on a number of real-world and synthetic datasets. As the final graph is constructed using the graph construction method of Liu and Terzi~\cite{liu2008towards}, we focus most of our evaluation on the heuristics proposed to compute the anonymous matrix $\widetilde{D}$ from to the original matrix $D$. More specifically, the quality of our solution is measured in terms of the normalized cost $\mathcal{C}(D,\widetilde{D})=\sum_{i=1}^{n} \sum_{t=1}^{T}\frac{|d_{it} - \widetilde{d}_{it}|}{T n (n-1)}$, and the results are presented in terms of average normalized cost ($\pm$ standard error) over $20$ repetitions. Also, recall that both \algoname{DegreeAnonymization} and \algoname{GreedyAssignment} depend on a number of parameters. Unless otherwise stated, we set the number of iterations $l$ in Algorithm~\ref{heuristic} to $10$ and we allow a maximum of $50$ iterations before convergence in \algoname{DegreeAnonymization}.


%
%

\subsection{Data}

The \textbf{MIT Social Evolution} dataset~\cite{madan2012sensing} 
consists of 5 layers representing different types of social connections between 84 students, for a total of $7,055$ edges. The layers represent respectively the connections between: 1) close friends, 2) students that participated in at least two common activities per week, 3) discussed politics at least once since the last survey, 4) shared Facebook photos, 5) shared blog/Live Journal/Twitter activities. 

The \textbf{Enron} dataset~\cite{shetty2005discovering} consists of the time-varying network representing e-mail exchanges between 151 users during the period from May 1999 to June 2002 (1,146 days). We consider three alternative versions of this graph, where the slices represent the activity over a month, week and day, respectively. We will refer to these three graphs as \textbf{Enron M}onth ($7,277$ edges over 38 months), \textbf{Enron W}eek ($13,080$ edges over 164 weeks) and \textbf{Enron D}ay ($21,257$ edges over $1,146$ days).

The \textbf{Irvine} dataset~\cite{opsahl2009clustering} represents the social connections between $1,899$ users of an online students community at University of California, Irvine. The data consists of $20,296$ interactions over a period of 51 days.

Finally, the \textbf{Yahoo} dataset\footnote{http://sandbox.yahoo.net} is a collection of 28 days of Yahoo Instant Messenger events, where each node is an IM user, and each link represents a communication event on a given day. This is the largest dataset considered in our study, with a total of 100,000 nodes interacting over 28 temporal slices. We consider also a reduced version of this graph where we select $10,000$ through a bread-first exploration of the largest connected component of the aggregated graph over the 28 days. We refer to the two versions of the Yahoo graph as \textbf{Yahoo $\boldsymbol{10^4}$} ($139,524$ edges) and \textbf{Yahoo $\boldsymbol{10^5}$} ($2,026,734$ edges).


In addition to these real-world datasets, we add a set of synthetic time-varying graphs where the temporal duration of an edge is sampled from a geometric distribution with parameter $\theta$, i.e., $\theta$ represents the probability of an edge to change from being absent (present) to being present (absent). By varying $\theta$ we can control the average temporal correlation~\cite{clauset2012persistence} of the time-varying graph, where the temporal correlation of a graph measures the overlap between successive temporal realizations of the nodes neighborhoods. Thus, sampling from a geometric distribution with a high $\theta$ will generate unstable time-varying graphs where edges constantly appear and disappear, i.e., graphs with a low temporal correlation. On the other hand, when $\theta \rightarrow 0$ the probability of a structural change is close to zero, i.e., the time-varying graph shows a high temporal correlation. Given a value of $\theta$, we then generate a time-varying graph by sampling $10$ temporal slices over $100$ nodes. In total, we generate 11 time-varying graphs with increasing average temporal correlation.


\begin{figure*}[!t]
\centering
\subfigure[Realizability]{\includegraphics[width=.25\linewidth]{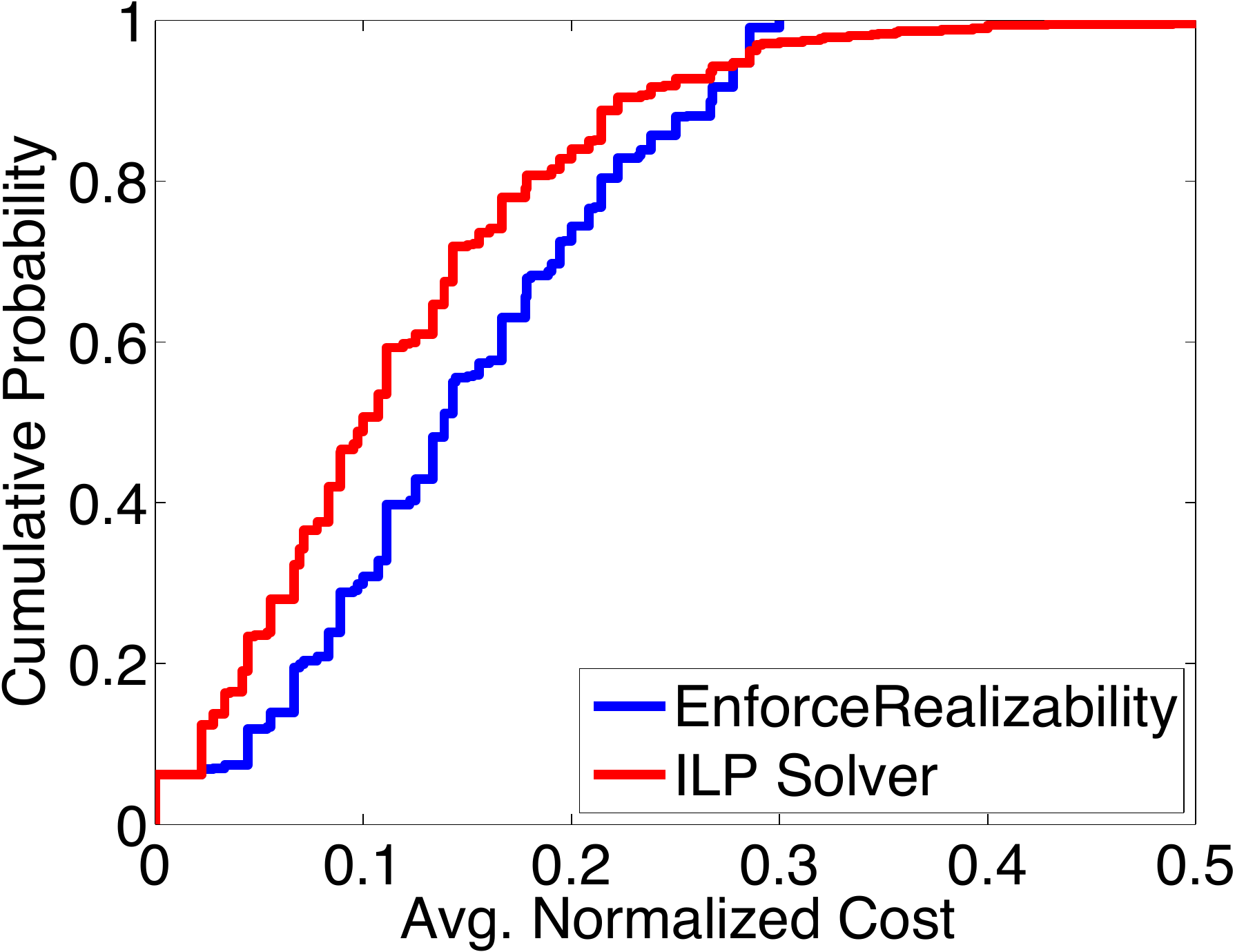}\label{realizability}}
\hspace{0.2in}
\subfigure[Temporal Correlation]{\includegraphics[width=.25\linewidth]{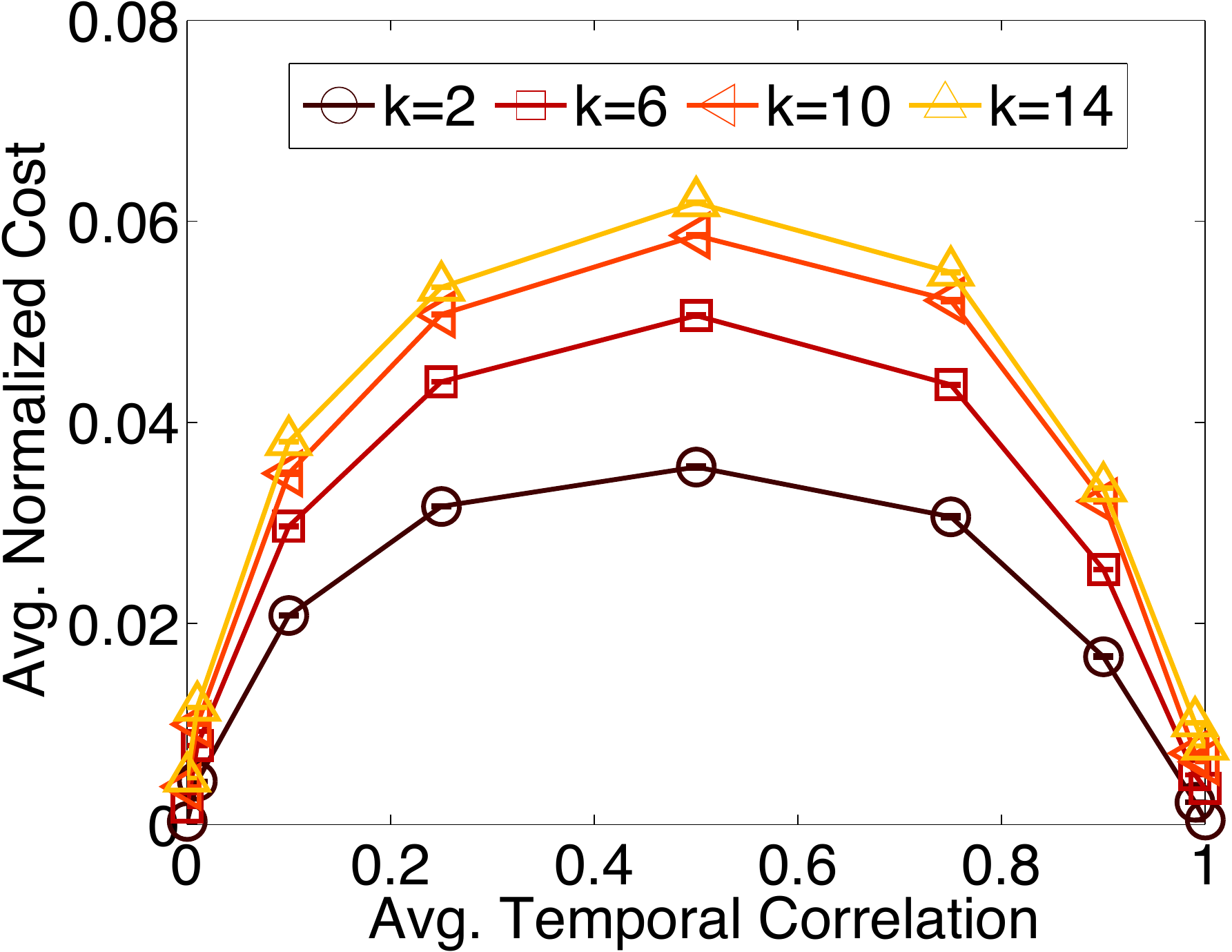}\label{correlation}}
\hspace{0.2in}
\subfigure[Temporal Resolution]{\includegraphics[width=.25\linewidth]{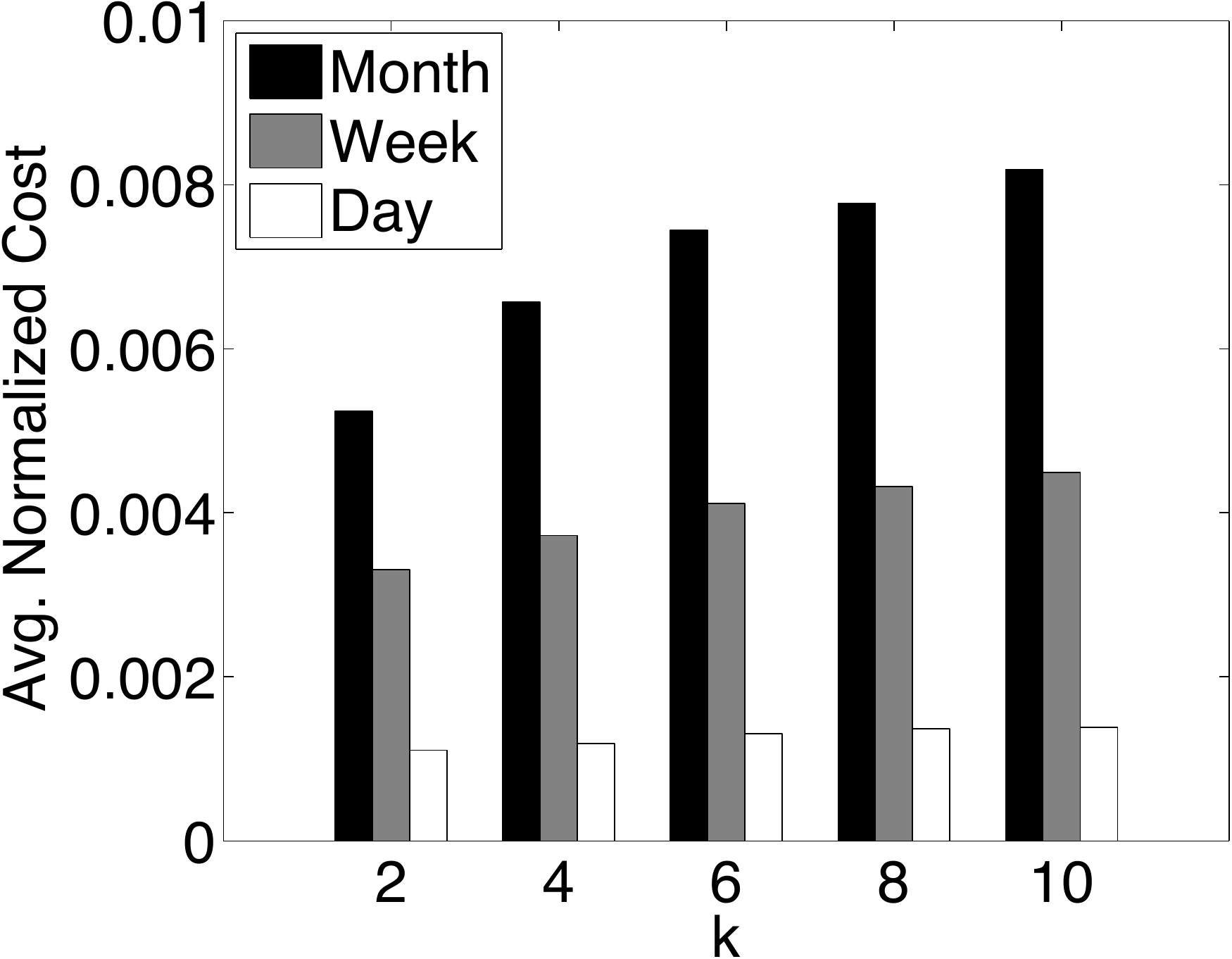}\label{granularity}}
\vspace{-0.1in}
\caption{(a) Empirical cumulative distribution function of the average anonymization costs for $1,000$ non-realizable degree sequences. (b) Average anonymization cost on the synthetic datasets for increasing levels of average temporal correlation. (c) Average anonymization cost on the Enron Month dataset for varying values of $k$ and varying temporal resolution.}
\end{figure*}

\subsection{Degree Anonymization}

As a first experiment, we evaluate the efficiency of \algoname{DegreeAnonymization} and in particular of our \algoname{GreedyAssignment} heuristic. We compare it with an optimal assignment of the nodes to the anonymity groups obtained by solving the assignment step with a standard LP solver. In these experiments we use the revised simplex algorithm implemented in the GNU Linear Programming Kit (GLPK), version 4.35\footnote{http://www.gnu.org/software/glpk/glpk.html}, and we compare the solution found by the LP solver to that of \algoname{GreedyAssignment}. GLPK is a free and open source software that is designed to solve large-scale linear programs. However, note that in these experiments we make use only of the MIT, Enron and Irvine datasets, as the LP solver was not able to scale to the size of the Yahoo datasets. Note in fact that in this dataset when $k=2$ the number of anonymity groups is $m=50,000$, for a total of 5 billion possible node-to-group assignments, i.e., the matrix of coefficients in the GAP has 5 billion entries.

Fig.~\ref{greedy_anonymization_iter} shows the average normalized cost and the average number of iterations to convergence for \algoname{DegreeAnonymization}, when either \algoname{GreedyAssignment} or the LP solver are used to determine the optimal assignment. Our first observation is that in general in both cases as $k$ increases the anonymization cost increases. This is not unexpected, as creating larger anonymity groups requires introducing an increasing amount of structural noise. However, we also note a slight drop in the cost for some values $k$. This may be linked to the cost of assigning residual nodes through the \algoname{AssignResidual} procedure. Recall, in fact, that the number of residual nodes depends on the anonymity level $k$, and it can increase for some $k_2>k_1$. As an example, let us consider a graph with 12 nodes, where with $k=4$ there are no residual nodes to assign, whereas with $k=3$ there is 1 node left to assign.

Fig.~\ref{greedy_anonymization_iter} (top) also shows that our \algoname{GreedyAssignment} heuristic generally achieves a good approximation of the anonymization cost with respect to the LP solver. In particular, for $k=2$ our heuristic consistently outperforms the optimal solution found using the LP solver. This in turn may be related to the problem of local minima for the simplex method. On the other hand, our greedy exploration of the function landscape seems to lead \algoname{DegreeAnonymization} to find a better local minimum. However, as the landscape gets more complicated, i.e., as $k$ increases, the performance of our heuristic quickly deteriorates and \algoname{DegreeAnonymization} with the \algoname{GreedyAssignment} heuristic performs consistently worse than with the LP solver, although the two costs remain generally close.

Fig.~\ref{greedy_anonymization_iter} (bottom) shows the average number of iterations needed to converge for \algoname{DegreeAnonymization}, when either \algoname{GreedyAssignment} or the LP solver are used to determine the optimal assignment. Interestingly, over all the 3 datasets for $k=2$ we see that \algoname{DegreeAnonymization} with the LP solver quickly converges to a non-optimal local minimum, while \algoname{GreedyAssignment} leads to a slower convergence and a better local minimum. As $k$ increases, \algoname{DegreeAnonymization} tends to reach convergence in fewer iterations when \algoname{GreedyAssignment} is used. Although Fig.~\ref{greedy_anonymization_iter} shows that this leads to a slightly higher anonymization cost, Fig.~\ref{greedy_anonymization_iter} shows that this is compensated for by a faster convergence. Moreover, it should be noted that a single run of \algoname{GreedyAssignment} is considerably faster than a single run of the LP solver.

\begin{figure*}[!t]
\centering
\subfigure[Enron M]{\includegraphics[width=.25\linewidth]{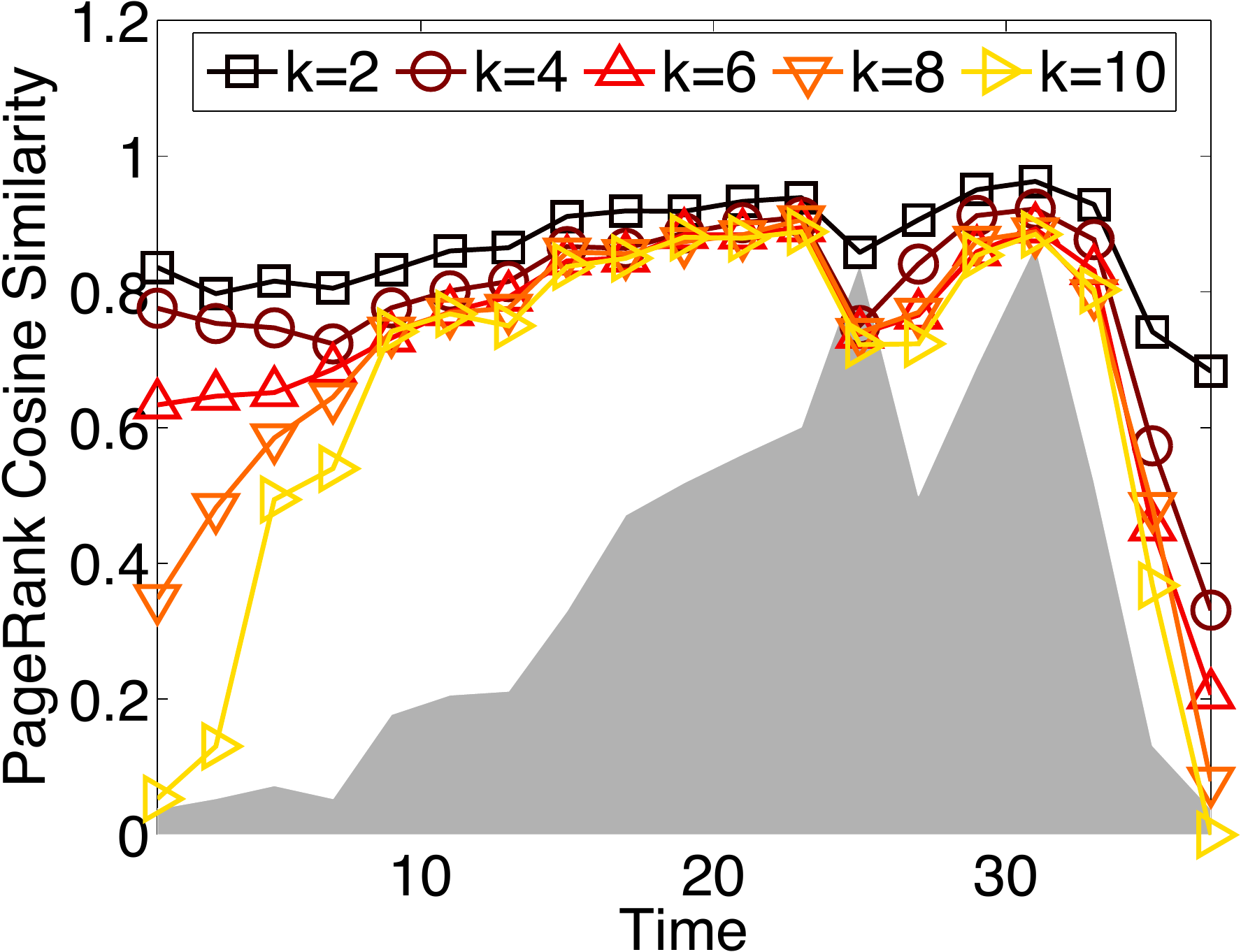}}
\hspace{0.2in}
\subfigure[Irvine]{\includegraphics[width=.25\linewidth]{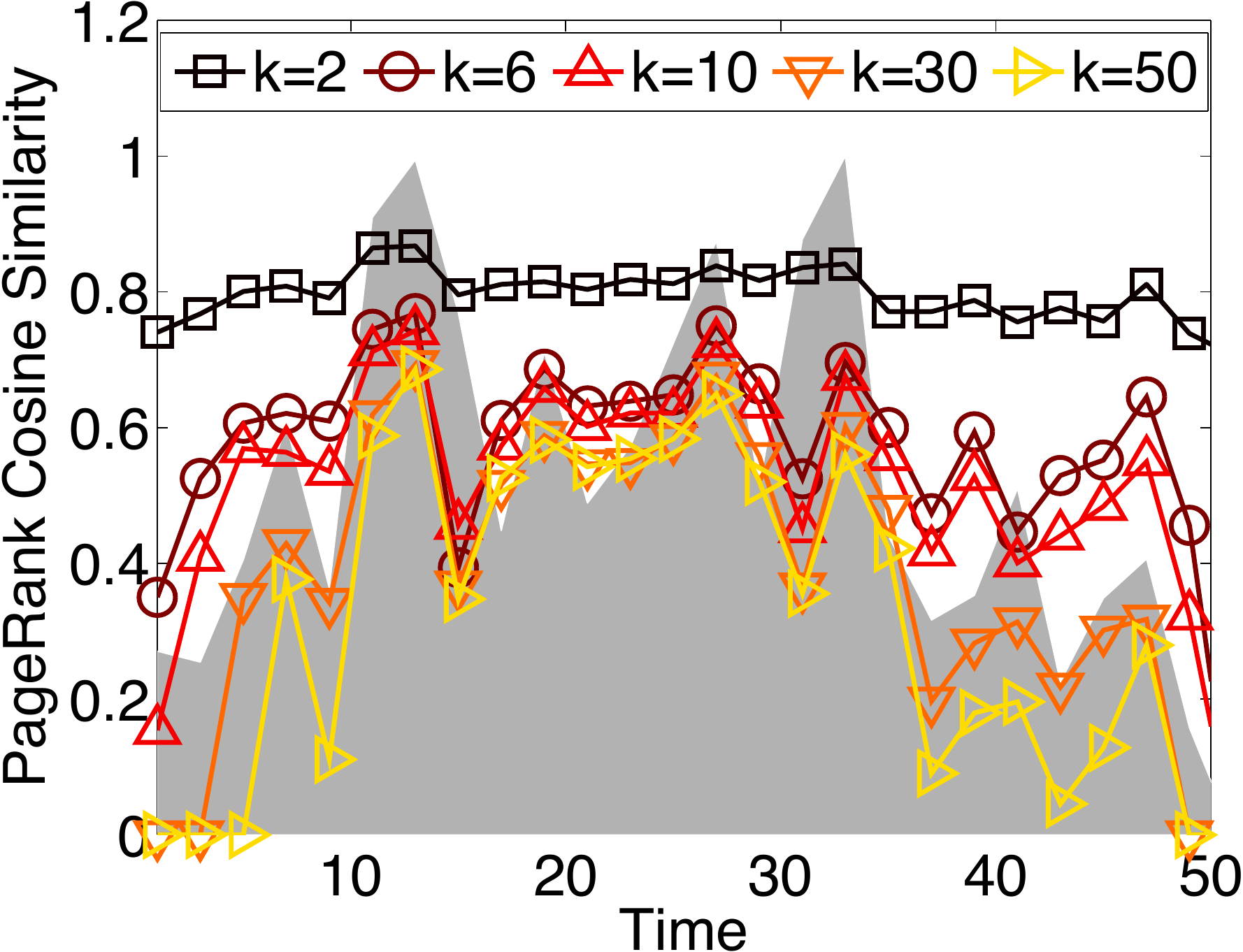}}
\hspace{0.2in}
\subfigure[Yahoo $10^5$]{\includegraphics[width=.25\linewidth]{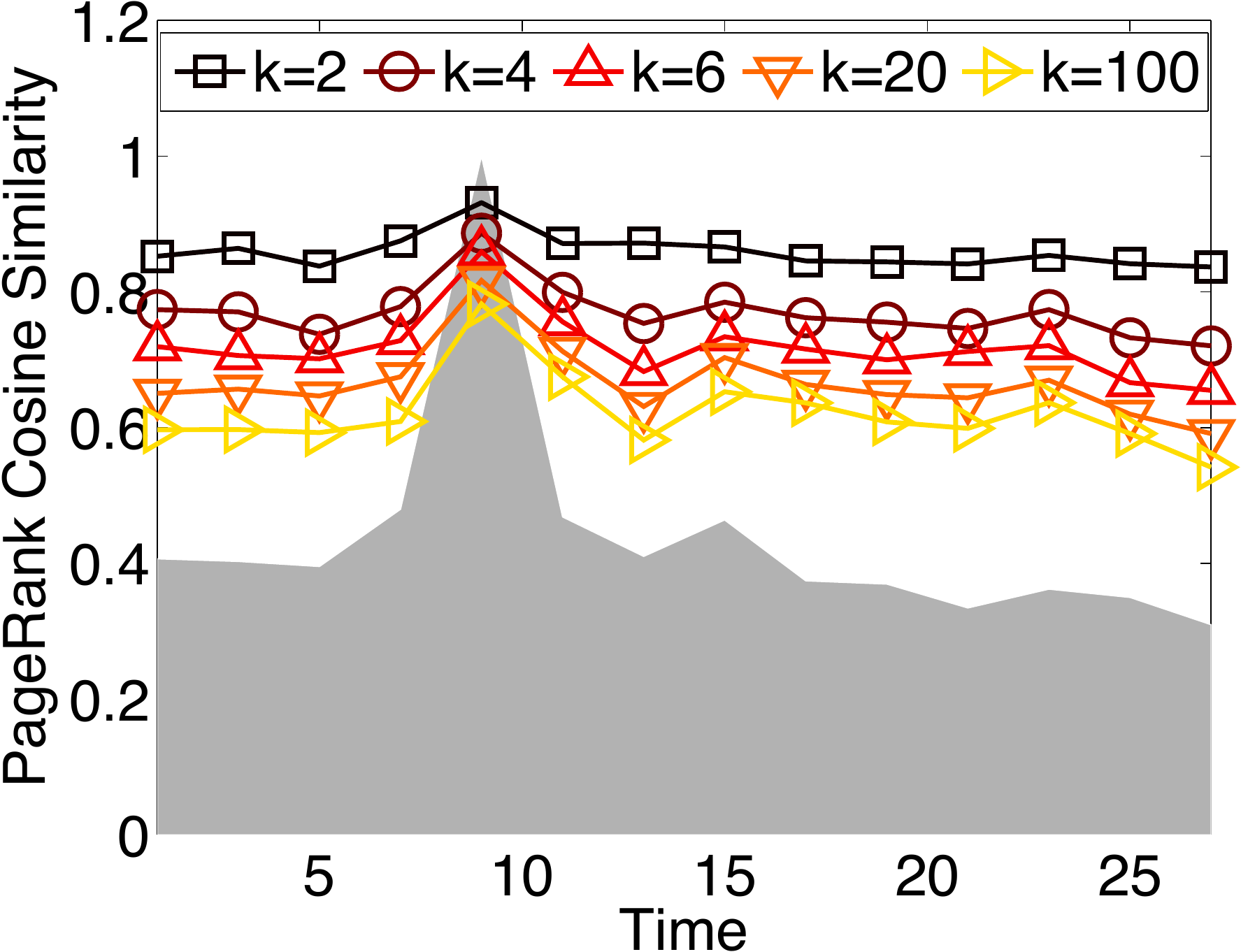}}
\vspace{-0.1in}
\caption{Average cosine similarity between the PageRank vectors of the original temporal slices and the anonymized ones. The shaded area shows how the number of active edges varies with time.}
\label{pr}
\end{figure*}


\subsection{Degree Sequences Realizability}

We now evaluate the \algoname{EnforceRealizability} algorithm. More specifically, we are interested in comparing the solutions obtained solving Eq.~\ref{linear} with those of Eq.~\ref{alternative_problem}. Recall in fact that the feasible solutions of Eq.~\ref{alternative_problem}, although feasible also for Eq.~\ref{linear}, are only a subset of those. 
We generate 1,000 random degree sequences over 10 nodes, where each sequence is created such that it is $k$-anonymous but not realizable, for a random level of anonymity $k$. The reason why we resort to synthetic data is that in our experiments we observe that, when we apply \algoname{DegreeAnonymization} on real-world data, the anonymized degree sequence of each slice is almost always realizable, a behaviour that was also observed in~\cite{liu2008towards}\footnote{The sum of the degree sequence may still be odd, but this can be fixed easily without invoking \algoname{EnforceRealizability}.}. This may be due to the fact that the original degree sequences are indeed realizable, and thus it may be more likely that the anonymized degree sequences are also realizable.

%
%

Fig.~\ref{realizability} shows the empirical cumulative distribution function of the normalized costs of the degree sequences obtained by solving the Integer Linear Program of Eq.~\ref{linear} and the linear program of Eq.~\ref{alternative_problem}, where the solution of Eq.~\ref{alternative_problem} is computed using the \algoname{EnforceRealizability} algorithm. It is interesting to note that the solutions found by \algoname{EnforceRealizability} are close to those found by solving the original NP-hard problem. 

\subsection{Temporal Correlation}

The average temporal correlation is defined between 0 and 1~\cite{clauset2012persistence}, where a value of 0 is achieved for completely anti-correlated temporal slices, a value of 1 is achieved for completely correlated temporal slices, while non-correlation corresponds to a 0.5 value. Fig.~\ref{correlation} shows that, independently of $k$, when the temporal slices are strongly correlated the anonymization costs drops dramatically. In fact, the more homogeneous the structure of the slices is, the easier it is to define anonymity groups that will remain consistent on all the slices without introducing much noise. Note that also a strongly anti-correlated time-varying graph can be considered homogeneous, as the in the limit case, i.e., $p=1$, the structure of the graph at time $t$ is always identical to that at time $t\pm2$.

\subsection{Temporal Resolution}

Similarly to the temporal correlation, the temporal resolution of the graph slices may also influence the complexity of the anonymization task. In fact, Fig.~\ref{granularity} shows that, as we increase the temporal resolution, the average anonymization cost becomes less dependent on $k$. More specifically, with a temporal resolution of 1 month, the cost of enforcing $k=10$ anonymity is about $50\%$ higher than for $k=2$, whereas with a temporal resolution if 1 week there is a $35\%$ increase and with a 1 day resolution a $25\%$ increase. In general, we can think that the higher the temporal resolution, the sparser the slices will be. A sparse graph is naturally anonymous, as a large number of nodes are completely disconnected from the rest, i.e., they remain idle, and thus they are indistinguishable from each other. Thus, when enforcing anonymity across the temporal dimension, we have a higher degree of freedom when grouping the nodes in sparse slices while minimizing edit operations. In the limit case, if we consider a small enough time interval, some temporal slices become empty, i.e., we observe no interactions between the nodes during some periods. Here each node is indistinguishable from the remaining $n-1$ nodes of the graph, and no structural alteration is needed in these slices when aligning the anonymity groups across the longitudinal dimension.

\subsection{Impact on Graph Structure}

So far we have been evaluating our anonymization framework in terms of the normalized anonymization cost. However, this gives us only a partial insight on the information loss that we incur when we anonymize a time-varying graph. In order to evaluate better the effects of the structural perturbation caused by the anonymization process, we evaluate the page PageRank~\cite{page1999pagerank} of the anonymized time-varying graph. The PageRank vector is a measure of node importance commonly used in network analysis. We compute the PageRank vector of each temporal slice for both the original and the anonymized graphs. Fig.~\ref{pr} shows the cosine similarity~\cite{jain1988algorithms} between the PageRank vectors of the original and anonymized temporal slices as a function of time, over three different datasets. Note the shaded area showing the varying volume of interactions over time. When $k$ is low, the PageRank centrality of the nodes is well approximated in the anonymized graph, suggesting that most of the structural information is retained. However, as the level of anonymity increases, more noise needs to be added and the centrality of the anonymized nodes starts to deviate from its original value. Interestingly, we observe that the cosine similarity remains high on the temporal slices where most of the interactions are concentrated. This should not come as a surprise, as sparser slices are more sensitive to the addition and removal of edges.

\subsection{Runtime Evaluation}

\begin{table}[!t]
\scriptsize
\centering
\vspace{0pt}
\begin{tabular}{|l||c||c||c|}

\hline
~Dataset~&~$k=2$~&~$k=5$~&~$k=10$\\ \hline \hline

~MIT~&~$0.227 \pm 0.009$~&~$0.138 \pm 0.007$~&~$0.131 \pm 0.006$~\\
~Enron (M)~&~$0.578 \pm 0.015$~&~$0.358 \pm 0.008$~&~$0.267 \pm 0.006$~\\
~Enron (W)~&~$0.927 \pm 0.027$~&~$0.591 \pm 0.021$~&~$0.400 \pm 0.012$~\\
~Enron (D)~&~$2.807 \pm 0.113$~&~$1.851 \pm 0.047$~&~$1.888 \pm 0.033$~\\
~Irvine~&~$21.28 \pm 0.265$~&~$8.151 \pm 0.285$~&~$4.155 \pm 0.107$~\\
~Yahoo $10^4$~&~$3,524 \pm 16.11$~&~$1,518 \pm 16.29$~&~$632.2 \pm 14.28$~\\
~Yahoo $10^5$~&~$\approx 80$ hours~&~$\approx 40$ hours~&~$\approx 20$ hours~\\ \hline
\end{tabular}
 \caption{Runtime evaluation (seconds).}\label{runtime}
 \vspace{-0.1in}
\end{table}

We conclude this section with the runtime evaluation of our framework, as reported in Table~\ref{runtime}. Note that the code of \algoname{DegreeAnonymization} can be easily parallelized using standard multiprocessing programming APIs such as OpenMP\footnote{http://www.openmp.org}. Thus, the runtimes are measured by executing the $l$ outer iterations of \algoname{DegreeAnonymization} in parallel on a server equipped with two 6 cores Intel Xeon E5645 (2.40GHz) HyperThreading enabled CPUs with a total of 24 logical cores and 48GB of RAM. As we can see, the anonymization of the Yahoo graph is the most expensive one in terms of time, as it took approximately $80$ hours to create a $k=2$ anonymous graph. Recall, however, that although the number of nodes of the Yahoo graph is $100,000$, we are effectively operating on a total of $100,000 \times 28 \mbox{(days)} = 2,800,000$ million nodes. Another important observation is that the runtime does not grow linearly with the longitudinal dimension. While Enron Day has $\approx 30$ times the temporal slices of Enron Month, the runtime is only $\approx 5$ times higher. Finally, we should stress that our anonymization technique is aimed at preventing attacks on a time-varying graph that has been published in its entirety, and thus the computational time can be considered a less stringent constraint than the level of privacy that we can ensure.

\section{Discussion and Conclusions}\label{conclusions}

In this paper we have presented a novel framework for the anonymization of time-varying and multi-layer graphs. We have considered the case of an attacker that has access to the number of social ties of an OSN user over time or over multiple online platforms. In order to protect the nodes' identity, we have proposed to perturb the structure of the time-varying graph so that the temporal degree sequence of each node become indistinguishable from that of at least other $k-1$ nodes. To this end, we have introduced a variant of $k$-means in the $l_1$ space with the additional constraint that each group needs to contain at least $k$ nodes. We have also shown how to approximate the problem of making a degree sequence realizable as an iterated linear program, and we have proposed a fast and effective algorithm to solve it. We have applied the proposed framework on a number of real-world and synthetic networks, and we have shown that the amount of edge insertions or deletions that we need to perform depends on the average temporal correlation~\cite{clauset2012persistence} of the graph. In order to evaluate the structural information loss after anonymization is applied, we have compared the PageRank vectors of the original and anonymized temporal slices, and we have found that the amount of structural information that is preserved is higher in temporal slices that correspond to high activity periods.

Note that in our framework there is a clear trade-off between the computational complexity of the proposed algorithms and the quality of the local optimum we converge to. In other words, a lower runtime inevitably comes at the cost of an increased edit distance between the original graph and the anonymized one. We plan to investigate more efficient heuristics, both in terms of time complexity and quality of the solution. In particular, our aim is to modify the proposed framework to be able to scale up to even larger graphs. Another interesting direction of research is the analysis of
scenarios in which the graph is only partially $k$-anonymous, i.e., only a subset of the nodes satisfies $k$-anonymity, or where the level of anonymity $k$ varies across the nodes~\cite{chester2012anonymizing}.

\section*{Acknowledgement}
This work was supported through the ``The Uncertainty of Identity: Linking Spatiotemporal Information Between Virtual and Real Worlds'' Project (EP/ J005266/1), funded by the EPSRC, and the ``LASAGNE'' Project, Contract No. 318132 (STREP), funded by the European Commission.

\small{
\bibliographystyle{aaai}
\bibliography{biblio}
}

\end{document}